\newtheorem{lemma}{Lemma}
\newtheorem{proposition}{Proposition}
\begin{document}

\begin{frontmatter}

\title{Efficient Evaluation of the Probability Density Function of a Wrapped Normal Distribution}

\author[isas]{Gerhard~Kurz}
\ead{gerhard.kurz@kit.edu}

\author[isas]{Igor~Gilitschenski}
\ead{gilitschenski@kit.edu}

\author[isas]{Uwe~D.~Hanebeck}
\ead{uwe.hanebeck@ieee.org}

\address[isas]{Intelligent Sensor-Actuator-Systems Laboratory (ISAS)\\
Institute for Anthropomatics and Robotics\\
Karlsruhe Institute of Technology (KIT), Germany\vspace{3mm}}

\begin{abstract}
The wrapped normal distribution arises when a the density of a one-dimensional normal distribution is wrapped around the circle infinitely many times. At first look, evaluation of its probability density function appears tedious as an infinite series is involved. In this paper, we investigate the evaluation of two truncated series representations. As one representation performs well for small uncertainties whereas the other performs well for large uncertainties, we show that in all cases a small number of summands is sufficient to achieve high accuracy.
\end{abstract}

\end{frontmatter}

\section{Introduction} \label{sec:introduction}
The wrapped normal (WN) distribution is one of the most widely used distributions in circular statistics. Applications for the WN distribution include circular filtering \cite{ACC13_Kurz}, \cite{ACC14_Kurz}, constrained tracking \cite{IPIN13_Kurz}, speech processing \cite{agiomyrgiannakis2009}, and bearings-only tracking \cite{Fusion13_Gilitschenski}. However, evaluation of the WN probability density function can appear difficult because it involves an infinite series. This is one of the main reasons why many authors (such as \cite{azmani2009}, \cite{stienne2013}, \cite{mokhtari2013}) use the von Mises distribution instead, which is even sometimes referred to as circular normal distribution \cite{jammalamadaka2001}. In this paper, we will show that a very accurate numerical evaluation of the WN probability density function can be performed with little effort.

The wrapped normal distribution \cite{jammalamadaka2001}, \cite{mardia1999} is defined by the probability density function (pdf)
\begin{align*}
f(x;\mu,\sigma) = \frac{1}{\sqrt{2 \pi} \sigma} \sum_{k=-\infty}^\infty \exp \left( - \frac{(x + 2\pi k - \mu)^2}{2 \sigma^2}  \right) \ , 
\end{align*}
with $x \in [0, 2 \pi)$, location parameter $\mu \in [0, 2 \pi)$, and uncertainty parameter $\sigma > 0$. Because the summands of the series converge to zero, it is natural to approximate the pdf with a truncated series
\begin{align*}
f(x;\mu,\sigma) \approx f_n(x;\mu,\sigma) = \frac{1}{\sqrt{2 \pi} \sigma} \sum_{k=-n}^n \exp \left( - \frac{(x + 2\pi k - \mu)^2}{2 \sigma^2}  \right) \ ,
\end{align*}
where only $2n+1$ summands are considered. We will investigate the choice of $n$ (depending on $\sigma$) in this paper.

As we will later prove, the series representation defined above yields a good approximation for small values of $\sigma$ only. For this reason, we introduce a second representation, which yields good approximations for large values of $\sigma$. The pdf of a WN distribution is closely related to the Jacobi theta function \cite{abramowitz1972}. This leads to another representation of the pdf \cite[(2.2.15)]{jammalamadaka2001}
\begin{align*}
g(x;\mu,\sigma) = \frac{1}{2 \pi} \left( 1 + 2 \sum_{k=1}^\infty \rho^{k^2} \cos (k (x-\mu)) \right) \ ,
\end{align*}
where $\rho = \exp( -\sigma^2/2)$ . Analogous to $f_n$, we define a truncated version 
\begin{align*}
g(x;\mu,\sigma) \approx g_n(x;\mu,\sigma) = \frac{1}{2 \pi} \left( 1 + 2 \sum_{k=1}^n \rho^{k^2} \cos (k (x-\mu)) \right) \ ,
\end{align*}
which only considers the first $n$ summands.\footnote{We treat the parameter $n$ in $f_n$ and $g_n$ the same way, although the evaluation of $f_n$ involves $2n+1$ summands whereas the evaluation of $g_n$ only involves $n$ summands. However, the computational effort for evaluation of a single summand of $g_n$ is higher, which roughly negates this difference.}

\section{Empirical Results}
We implemented the truncated series $f_n$ and $g_n$ as well as the exact solution (which increases $n$ until the value of the pdf does not change anymore because of the limited accuracy of the data type). We used the IEEE 754 double data type for all variables. It consists of 1 bit for the sign, 11 bit for the  exponent, and 52 bit for the fraction \cite{goldberg1991}. Thus, it is accurate to approximately 15 decimal digits. 

\begin{figure}[h]
	\centering
	\includegraphics[width=8cm]{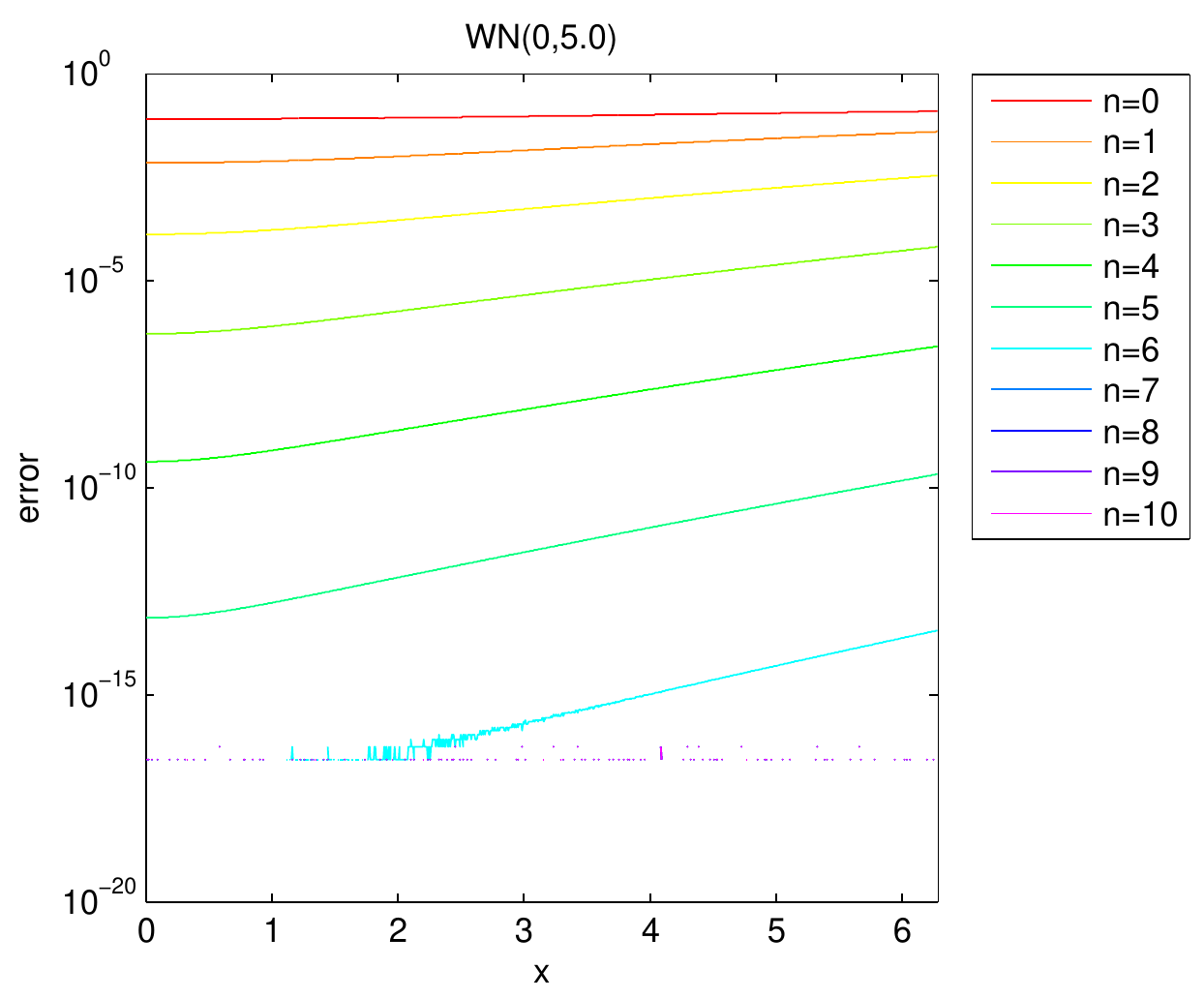}\quad  \includegraphics[width=8cm]{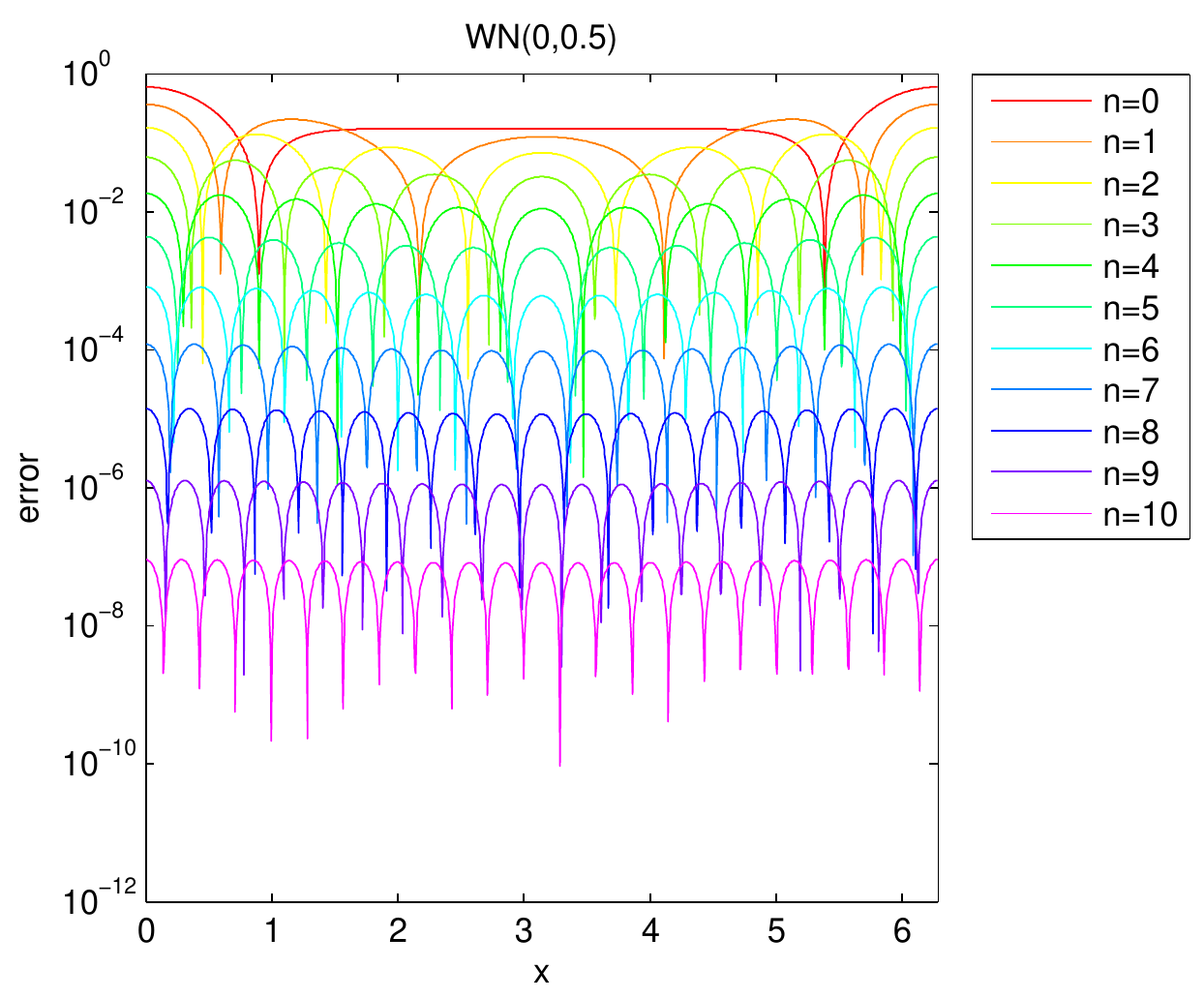} 
	\caption{Empirical results depicting the error for different values of $n$ for $e_f(n,\sigma)$ with $\sigma=5$ (left) and $e_g(n,\sigma)$ with $\sigma = 0.5$ (right). Note that some points are rounded to zero because of the limited accuracy of the floating point arithmetic. These values are not depicted, because it is not possible to display them in a logarithmic plot.}
	\label{fig:empiricalresults-x}
\end{figure}

For $x, \mu \in [0,2 \pi)$, the error is largest for $\mu=0$ and $x \to 2 \pi$ in both approximations (see Fig.~\ref{fig:empiricalresults-x}). We will later show this fact in the theoretical section. Thus, we compare the error $e_f(n, \sigma) = |f(2 \pi;0;\sigma)-f_n(2 \pi,0,\sigma)|$ and $e_g(n,\sigma) = |g(2 \pi;0;\sigma)-g_n(2 \pi,0,\sigma)|$ respectively. The results for $n=1,2, \dots, 11$ are depicted in Fig.~\ref{fig:empiricalresults}. Furthermore, we include a comparison to the uniform distribution with pdf $f_u(x) = \frac{1}{2 \pi}$, which is also a special case of $g_n$ for $n=0$.
\begin{figure}[h]
	\centering
	\includegraphics[width=8cm]{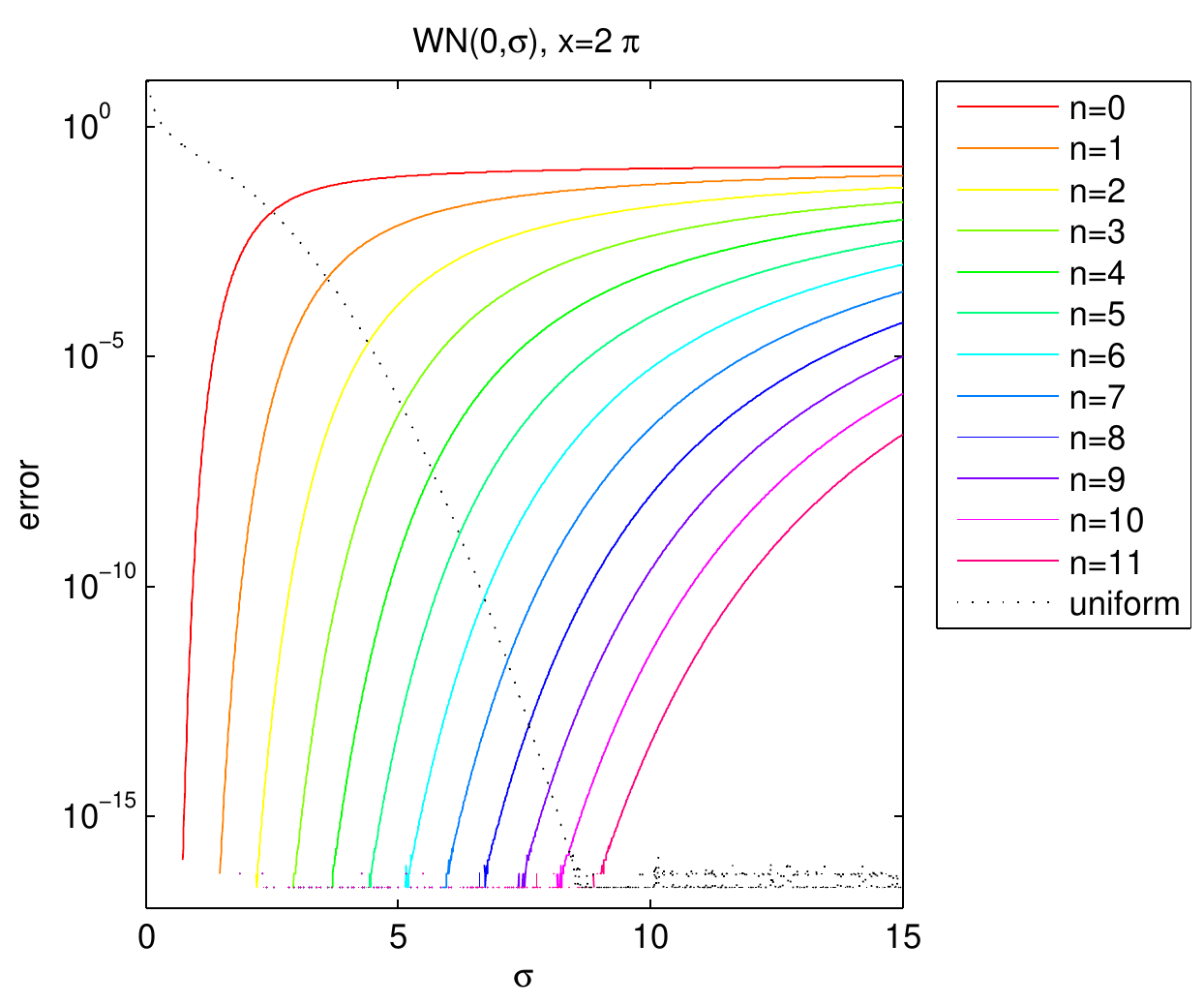}\quad \includegraphics[width=8cm]{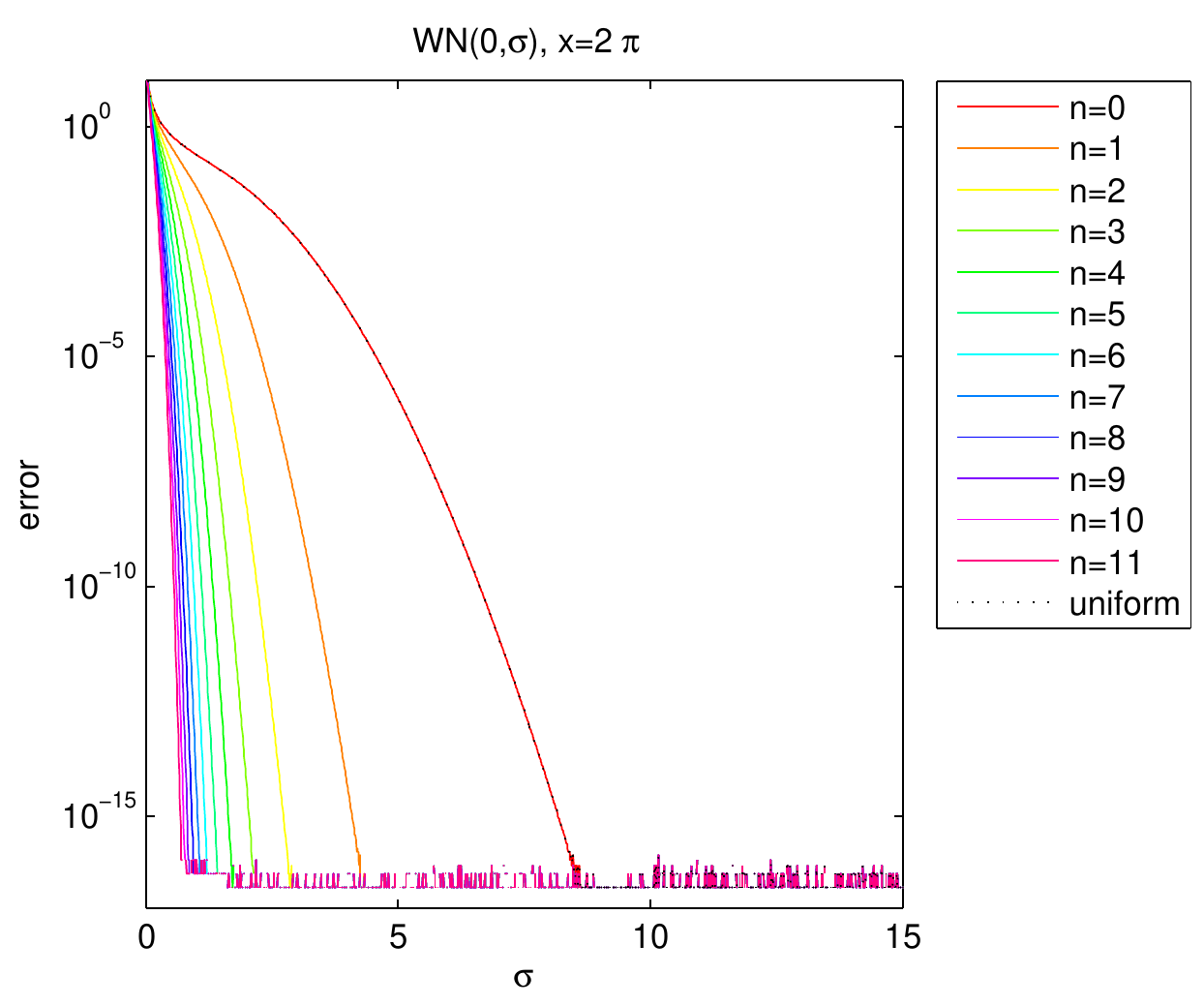} 
	\caption{Empirical results depicting the error for different values of $n$ for $e_f(n,\sigma)$ (left) and $e_g(n,\sigma)$ (right). We set the WN parameter $\mu=0$ and $x= 2\pi$.}
	\label{fig:empiricalresults}
\end{figure}
As can be seen, the uniform distribution is accurate up to numerical precision for approximately $\sigma\geq 9$.

We empirically determined the combined approximation based on $f_n$ and $g_n$ for different accuracies (see Table~\ref{table:combineapprox}).


\begin{table}
	\centering
	\begin{tabular}{ccc}
	\toprule
	\bf{accuracy} & \bf{range} & \bf{approximation} \\
	\midrule
	& $0 < \sigma < 1.34$ & $f^0(x; \mu, \sigma)$ \\
	1E-5 & $1.34 \leq \sigma < 2.28$ & $f^1(x; \mu, \sigma)$ \\
	& $2.28 \leq \sigma < 4.56$ & $g^1(x; \mu, \sigma)$ \\
	& $ 4.56 \leq \sigma$ & $g^0(x; \mu, \sigma)$ \\
	\midrule
	& $0 < \sigma < 0.93$ & $f^0(x; \mu, \sigma)$ \\
	& $0.93 \leq \sigma < 1.89$ & $f^1(x; \mu, \sigma)$ \\
	1E-10 & $1.89 \leq \sigma < 2.21$ & $f^2(x; \mu, \sigma)$ \\
	& $2.21 \leq \sigma < 3.31$ & $g^2(x; \mu, \sigma)$ \\
	& $3.31 \leq \sigma < 6.62$ & $g^1(x; \mu, \sigma)$ \\
	& $6.62 \leq \sigma$ & $g^0(x; \mu, \sigma)$ \\
	\midrule
	& $0 < \sigma < 0.76$ & $f^0(x; \mu, \sigma)$ \\
	& $0.76 \leq \sigma < 1.53$ & $f^1(x; \mu, \sigma)$ \\
	& $1.53 \leq \sigma < 2.31$ & $f^2(x; \mu, \sigma)$ \\
	1E-15 & $2.31 \leq \sigma < 2.73$ & $g^3(x; \mu, \sigma)$ \\
	& $2.73 \leq \sigma < 4.09$ & $g^2(x; \mu, \sigma)$ \\
	& $4.09 \leq \sigma < 8.17$ & $g^1(x; \mu, \sigma)$ \\
	& $8.17 \leq \sigma$  & $g^1(x; \mu, \sigma)$ \\
	\bottomrule
	\end{tabular}
	\caption{Combined approximations for different accuracies.}
	\label{table:combineapprox}
\end{table}

\section{Theoretical Results}
Before we analyze the approximation error of the different approaches, we prove an inequality for the error function.
\begin{lemma}
	\label{lemma:erfinequality}
	For $x>1$, the error function fulfills the inequality $ 1- \operatorname{erf}(x) \leq \frac{e^{-x^2}}{\sqrt{\pi}}$.
\end{lemma}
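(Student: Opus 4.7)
The plan is to rewrite $1 - \operatorname{erf}(x)$ as the complementary error function in integral form, namely $1 - \operatorname{erf}(x) = \frac{2}{\sqrt{\pi}} \int_x^\infty e^{-t^2}\, dt$, and then apply a simple majorization trick to the integrand that produces an antiderivative in closed form.

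First, I would introduce the factor $t/x$ inside the integral. For $t \geq x > 0$, we have $t/x \geq 1$, so $e^{-t^2} \leq \frac{t}{x} e^{-t^2}$; this enlarges the integrand but makes it directly integrable since $\frac{d}{dt}\left(-\tfrac{1}{2} e^{-t^2}\right) = t e^{-t^2}$. Then I would carry out the integration to obtain $\int_x^\infty e^{-t^2}\, dt \leq \frac{1}{x} \int_x^\infty t\, e^{-t^2}\, dt = \frac{e^{-x^2}}{2x}$, which gives the familiar tail bound $1 - \operatorname{erf}(x) \leq \frac{e^{-x^2}}{x \sqrt{\pi}}$.

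Finally I would use the hypothesis $x > 1$ to drop the $1/x$ factor: since $1/x < 1$, we get $1 - \operatorname{erf}(x) \leq \frac{e^{-x^2}}{\sqrt{\pi}}$, as claimed. There is no real obstacle here; the only subtlety is noticing that multiplying the integrand by $t/x$ is the right majorization because it produces an exact antiderivative, and that the assumption $x > 1$ is exactly what is needed in the last step to discard the $1/x$ prefactor. (In fact the proof establishes the strictly sharper inequality $1 - \operatorname{erf}(x) \leq \frac{e^{-x^2}}{x \sqrt{\pi}}$ for all $x > 0$, of which the stated lemma is a convenient weakening.)
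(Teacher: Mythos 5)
Your proof is correct, but it takes a genuinely different route from the paper. The paper invokes the continued fraction representation of the error function (Abramowitz and Stegun, 7.1.14),
\[
1- \operatorname{erf}(x) = \frac{e^{-x^2}}{\sqrt\pi\left(x + \frac 1{2x + \frac 2{x + \dotsb}}\right)} \ ,
\]
and then observes that for $x>1$ the denominator is at least $\sqrt{\pi}$, since the continued fraction tail is positive. You instead start from the integral form $1-\operatorname{erf}(x) = \tfrac{2}{\sqrt{\pi}}\int_x^\infty e^{-t^2}\,dt$ and majorize the integrand by $\tfrac{t}{x}e^{-t^2}$, which integrates in closed form and yields the classical Gaussian tail bound $1-\operatorname{erf}(x) \leq \tfrac{e^{-x^2}}{x\sqrt{\pi}}$ for all $x>0$, from which the lemma follows by discarding the factor $1/x<1$. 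Your computation is correct at every step. The trade-off: the paper's argument is shorter on the page but leans on a cited identity whose convergence and positivity one must take on faith, whereas yours is fully self-contained and elementary, and it makes explicit the sharper intermediate bound $e^{-x^2}/(x\sqrt{\pi})$ (which the continued fraction route also implicitly provides, since the denominator exceeds $\sqrt{\pi}\,x$). Either proof serves the paper's purposes equally well.
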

\begin{proof}
We use the continued fraction representation \cite[7.1.14]{abramowitz1972}
\begin{align*}
\operatorname{erf}(x) &= 1 - \frac{e^{-x^2}}{\sqrt\pi\left(x + \frac 1{2x + \frac 2{x + \frac 3{2x + \frac 4{x + \dotsb}}}}\right)} \\
\Rightarrow 1- \operatorname{erf}(x) &= \frac{e^{-x^2}}{\sqrt\pi\left(x + \frac 1{2x + \frac 2{x + \frac 3{2x + \frac 4{x + \dotsb}}}}\right)} \\
\underset{x>1}{\Rightarrow} 1- \operatorname{erf}(x) &\leq \frac{e^{-x^2}}{\sqrt{\pi}}
\end{align*} 
\end{proof}

\subsection{Representation Based on Wrapped Density}
We consider the approximation $f_n(x;\mu, \sigma) \approx f(x;\mu, \sigma)$. In the following proposition, we will show that the error decreases exponentially in $n$.

\begin{proposition}
	\label{prop:f}
	For $x, \mu \in [0, 2 \pi)$ and $n > 1+ \frac{\sigma}{\sqrt{2} \pi}$, the error $e_f(n, \sigma) = \left| f_n (x ; \mu, \sigma) - f (x ; \mu, \sigma) \right|$ has an upper bound
	\begin{align*}
	e_f(n, \sigma) < \frac{ \exp \left(- \frac{(\pi \sqrt{2} (n-1) )^2}{\sigma^2} \right) }{2 \pi^{3/2}} \ .
	\end{align*}
\end{proposition}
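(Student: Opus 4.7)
The plan is to bound the tail of the absolute-value-free series (all summands are non-negative) by comparing it to a Gaussian integral, and then apply Lemma~\ref{lemma:erfinequality} to get the stated exponential decay.

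First I would split the error into the two one-sided tails,
\begin{align*}
e_f(n,\sigma) = \frac{1}{\sqrt{2\pi}\,\sigma}\Bigl(\sum_{k=n+1}^{\infty} e^{-(x+2\pi k-\mu)^2/(2\sigma^2)} + \sum_{k=n+1}^{\infty} e^{-(x-2\pi k-\mu)^2/(2\sigma^2)}\Bigr)\ ,
\end{align*}
and handle them symmetrically. For the positive-$k$ tail, I would note that because $x-\mu\in(-2\pi,2\pi)$, the map $t\mapsto (x+2\pi t-\mu)^2$ is increasing on $t\geq 1$, so the summand is monotonically decreasing in $k$ for $k\geq n\geq 1$. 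The standard integral comparison then gives
\begin{align*}
\sum_{k=n+1}^{\infty} e^{-(x+2\pi k-\mu)^2/(2\sigma^2)} \;\leq\; \int_{n}^{\infty} e^{-(x+2\pi t-\mu)^2/(2\sigma^2)}\,dt\ .
\end{align*}

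Next I would perform the change of variables $u=(x+2\pi t-\mu)/(\sqrt{2}\,\sigma)$, which rewrites the integral as $\frac{\sigma}{\sqrt{2}\,\pi}\int_{u_n}^\infty e^{-u^2}\,du = \frac{\sigma\sqrt{\pi}}{2\sqrt{2}\,\pi}\bigl(1-\operatorname{erf}(u_n)\bigr)$, where $u_n=(x+2\pi n-\mu)/(\sqrt{2}\,\sigma)$. Since $x-\mu>-2\pi$, we have the uniform lower bound $u_n \geq \sqrt{2}\,\pi(n-1)/\sigma =: L$. The negative-$k$ tail is handled the same way, with the same lower bound $L$ appearing after the substitution (using $\mu-x>-2\pi$). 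Pulling out the prefactor $1/(\sqrt{2\pi}\,\sigma)$, the two tails combine to
\begin{align*}
e_f(n,\sigma) \;\leq\; \frac{1-\operatorname{erf}(L)}{2\pi}\ .
\end{align*}

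Finally, the hypothesis $n>1+\sigma/(\sqrt{2}\,\pi)$ is precisely equivalent to $L>1$, so Lemma~\ref{lemma:erfinequality} applies and yields $1-\operatorname{erf}(L)\leq e^{-L^2}/\sqrt{\pi}$. Substituting $L^2=(\sqrt{2}\,\pi(n-1))^2/\sigma^2$ gives exactly the claimed bound $e_f(n,\sigma)<\exp\!\bigl(-(\sqrt{2}\,\pi(n-1))^2/\sigma^2\bigr)/(2\pi^{3/2})$.

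The only non-routine step is checking the monotonicity needed for the integral comparison and propagating the worst-case values of $x$ and $\mu$ (the ones with $x-\mu$ closest to $\pm 2\pi$) through the substitution; once the uniform lower bound $L$ on $u_n$ is in hand, the rest is bookkeeping of constants. No additional tools beyond Lemma~\ref{lemma:erfinequality} and a Gaussian integral substitution are required.
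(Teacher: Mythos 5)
Your proof is correct and takes essentially the same route as the paper: split the error into the two one-sided tails, compare each with a Gaussian integral, reduce to $\bigl(1-\operatorname{erf}(\sqrt{2}\,\pi(n-1)/\sigma)\bigr)/(2\pi)$, and finish with Lemma~\ref{lemma:erfinequality}. The only cosmetic difference is the order of operations --- you perform the integral comparison first and insert the worst case $|x-\mu|<2\pi$ as a lower bound on the erf argument afterwards, whereas the paper bounds the summands by their worst case first and then integrates --- but both yield the identical intermediate expression and final bound.
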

\begin{proof}
We use the fact that $\sigma>0$ and $\exp(\cdot)>0$, and get
\begin{align}
e_f(n, \sigma) &= \left| f_n (x ; \mu, \sigma) - f (x ; \mu, \sigma) \right| \nonumber \\
&= \left| \frac{1}{\sqrt{2 \pi} \sigma} \sum_{k=-n}^n \exp \left( - \frac{(x - \mu - 2 k \pi )^2}{2 \sigma^2}  \right) - \frac{1}{\sqrt{2 \pi} \sigma} \sum_{k=-\infty}^\infty \exp \left( - \frac{(x - \mu - 2 k \pi )^2}{2 \sigma^2}  \right) \right| \nonumber \\
&\underset{\sigma > 0}{=} \frac{1}{\sqrt{2 \pi} \sigma} \left| \sum_{k=-\infty}^{-n-1} \exp \left( - \frac{(x - \mu - 2 k \pi )^2}{2 \sigma^2}  \right) + \sum_{k=n+1}^\infty \exp \left( - \frac{(x - \mu - 2 k \pi )^2}{2 \sigma^2}  \right) \right| \nonumber \\ 
&\underset{\exp(\cdot)>0}{=} \frac{1}{\sqrt{2 \pi} \sigma} \left( \sum_{k=-\infty}^{-n-1} \exp \left( - \frac{(x - \mu - 2 k \pi )^2}{2 \sigma^2}  \right) + \sum_{k=n+1}^\infty \exp \left( - \frac{(x - \mu - 2 k \pi )^2}{2 \sigma^2}  \right) \right) \ . \label{eq:fstep1} \\
\intertext{Now we make use of the fact that $\mu$ and $x$ are in the same interval of length $2\pi$, and combine the two series into one}
(\ref{eq:fstep1}) &\underset{|x-\mu|<2\pi }{<} \frac{1}{\sqrt{2 \pi} \sigma} \left( \sum_{k=-\infty}^{-n-1} \exp \left( - \frac{(- 2 \pi - 2 k \pi )^2}{2 \sigma^2}  \right) + \sum_{k=n+1}^\infty \exp \left( - \frac{( 2 \pi  - 2 k \pi )^2}{2 \sigma^2}  \right) \right) \nonumber \\
&= \frac{1}{\sqrt{2 \pi} \sigma} \left( \sum_{k=-\infty}^{-n-1} \exp \left( - \frac{(- 2 (k+1) \pi )^2}{2 \sigma^2}  \right) + \sum_{k=n+1}^\infty \exp \left( - \frac{( - 2 (k-1) \pi )^2}{2 \sigma^2}  \right) \right) \nonumber \\
&= \frac{1}{\sqrt{2 \pi} \sigma} \left( \sum_{k=-\infty}^{-n} \exp \left( - \frac{(2 k \pi )^2}{2 \sigma^2}  \right) + \sum_{k=n}^\infty \exp \left( - \frac{(2 k \pi )^2}{2 \sigma^2}  \right) \right) \nonumber \\
&= \frac{2}{\sqrt{2 \pi} \sigma} \sum_{k=n}^{\infty} \exp \left( - \frac{(2 k \pi )^2}{2 \sigma^2}  \right) \ , \label{eq:fstep2} \\
\intertext{and find an upper bound by integration}
(\ref{eq:fstep2}) &\leq \frac{2}{\sqrt{2 \pi} \sigma} \int_{k=n-1}^{\infty} \exp \left( - \frac{(2 k \pi )^2}{2 \sigma^2}  \right) dk \nonumber \\
&\underset{\text{\cite[7.1.2]{abramowitz1972}}}{=} \frac{\left(1 - \erf \left(\frac{\pi\, \sqrt{2}\, \left(n - 1\right)}{\sigma}\right) \right)}{2 \pi} \nonumber \\
&\underset{\text{Lemma \ref{lemma:erfinequality}}}{\leq} \frac{ \exp \left(- \frac{(\pi \sqrt{2} (n-1) )^2}{\sigma^2} \right)}{2 \pi^{3/2}} \nonumber \ ,
\end{align}
where we use the assumption $\frac{\pi\, \sqrt{2}\, \left(n - 1\right)}{\sigma}>1$ in order to apply Lemma~\ref{lemma:erfinequality}.
\end{proof}

\subsection{Representation Based on Theta Function}
In the following, we consider the approximation $g_n(x;\mu, \sigma) \approx g(x;\mu, \sigma)$. In this case, the error decreases exponentially in $n$ as well.

\begin{proposition}
	\label{prop:g}
	For $x, \mu \in [0, 2 \pi)$ and $n > \sqrt{2}/\sigma$, the error $e_g(n, \sigma) = \left| g_n (x ; \mu, \sigma) - g (x ; \mu, \sigma) \right|$ has an upper bound
	\begin{align*}
	e_g(n, \sigma) <\frac{\exp( - n^2 \sigma^2 / 2) }{\sqrt{2} \pi \sigma} \ .
	\end{align*}
\end{proposition}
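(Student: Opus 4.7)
The plan is to mirror the strategy used in Proposition~\ref{prop:f}: first isolate the tail of the series that is discarded by truncation, then bound it by an integral, and finally invoke Lemma~\ref{lemma:erfinequality} to turn the resulting complementary error function into a clean exponential.

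The first step is trivial compared to the $f_n$ case because of how $g_n$ is defined: the difference $g_n(x;\mu,\sigma) - g(x;\mu,\sigma)$ equals $-\frac{1}{\pi}\sum_{k=n+1}^\infty \rho^{k^2} \cos(k(x-\mu))$, so bounding the cosine by $1$ in absolute value and noting $\rho^{k^2}>0$ gives
\begin{align*}
e_g(n,\sigma) \;\leq\; \frac{1}{\pi}\sum_{k=n+1}^{\infty} \exp\!\left(-\frac{k^2\sigma^2}{2}\right).
\end{align*}
Unlike in Proposition~\ref{prop:f}, I do not need any trick involving $|x-\mu|<2\pi$; the worst case $x=2\pi$, $\mu=0$ is already absorbed by the cosine bound.

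Next I would dominate the discrete sum by an integral. Since the integrand $k \mapsto \exp(-k^2\sigma^2/2)$ is monotonically decreasing on $[0,\infty)$, each summand is bounded by the integral over the preceding unit interval, so $\sum_{k=n+1}^\infty \exp(-k^2\sigma^2/2) \leq \int_n^\infty \exp(-k^2\sigma^2/2)\,dk$. A substitution $u=k\sigma/\sqrt{2}$ turns this into a complementary error function:
\begin{align*}
\int_n^\infty \exp\!\left(-\tfrac{k^2\sigma^2}{2}\right) dk \;=\; \frac{\sqrt{\pi/2}}{\sigma}\Bigl(1-\operatorname{erf}\!\bigl(n\sigma/\sqrt{2}\bigr)\Bigr).
\end{align*}

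Finally I would apply Lemma~\ref{lemma:erfinequality} with argument $n\sigma/\sqrt{2}$; the hypothesis $n\sigma/\sqrt{2}>1$ is exactly the standing assumption $n>\sqrt{2}/\sigma$, which is the only role of that condition. The lemma yields $1-\operatorname{erf}(n\sigma/\sqrt{2}) \leq \exp(-n^2\sigma^2/2)/\sqrt{\pi}$, and collecting the prefactors $\tfrac{1}{\pi}\cdot\tfrac{\sqrt{\pi/2}}{\sigma}\cdot\tfrac{1}{\sqrt{\pi}} = \tfrac{1}{\sqrt{2}\,\pi\sigma}$ gives precisely the stated bound. I expect no real obstacles: the cosine bound is immediate, the integral comparison is standard, and the constant bookkeeping is straightforward. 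The only subtle point is remembering that the bound $\cos \leq 1$ is tight enough here to produce an exponentially small error, so no further refinement (such as an analogue of the $|x-\mu|<2\pi$ collapse in Proposition~\ref{prop:f}) is needed.
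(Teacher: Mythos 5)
Your proposal is correct and follows essentially the same route as the paper's proof: isolate the tail $\frac{1}{\pi}\sum_{k=n+1}^\infty \rho^{k^2}\cos(k(x-\mu))$, bound the cosine by $1$, dominate the sum by $\int_n^\infty \exp(-k^2\sigma^2/2)\,dk$, and apply Lemma~\ref{lemma:erfinequality} with argument $n\sigma/\sqrt{2}$. Your constant bookkeeping and your observation that no $|x-\mu|<2\pi$ argument is needed here both match the paper exactly.
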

\begin{proof}
We start with some simplifications
\begin{align}
e_g(n, \sigma) &= \left| g_n (x ; \mu, \sigma) - g (x ; \mu, \sigma) \right| \nonumber \\
&= \left|\frac{1}{2 \pi} \left( 1 + 2 \sum_{k=1}^n \rho^{k^2} \cos (k (x-\mu)) \right) - \frac{1}{2 \pi} \left( 1 + 2 \sum_{k=1}^\infty \rho^{k^2} \cos (k (x-\mu)) \right) \right| \nonumber \\
&=  \frac{1}{\pi} \left| \sum_{k=1}^n \rho^{k^2} \cos (k (x-\mu)) - \sum_{k=1}^\infty \rho^{k^2} \cos (k (x-\mu)) \right| \nonumber \\
&=  \frac{1}{\pi} \left| \sum_{k={n+1}}^\infty \rho^{k^2} \cos (k (x-\mu)) \right| \ , \label{eq:gstep1} \\
\intertext{use the triangle inequality and the fact that $|\cos(\cdot)| \leq 1$}
(\ref{eq:gstep1}) &\leq  \frac{1}{\pi} \sum_{k={n+1}}^\infty \rho^{k^2} \ .  \label{eq:gstep2} \\
\intertext{Now we find an upper bound by integration and simplify}
(\ref{eq:gstep2}) &\leq  \frac{1}{\pi} \int_{n}^\infty \rho^{k^2} dk \nonumber \\
&\underset{\text{\cite[7.1.2]{abramowitz1972}}}{=} \frac{1}{\pi} \cdot \frac{\sqrt{\pi} \text{ erfc}(n \sqrt{-\log(\rho)}) }{2 \sqrt(-\log(\rho))} \nonumber \\
&= \frac{1}{\pi} \cdot \frac{\sqrt{\pi} \text{ erfc}(n \sqrt{\sigma^2/2}) }{2 \sqrt{\sigma^2/2}} \nonumber \\
&= \frac{1 - \erf (n \sigma / \sqrt{2}) }{\sqrt{2 \pi} \sigma} \nonumber \\
&\underset{\text{Lemma \ref{lemma:erfinequality}}}{\leq} \frac{\exp( - n^2 \sigma^2 / 2) }{\sqrt{2} \pi \sigma} \ , \nonumber
\end{align}
where we use the assumption $n \sigma / \sqrt{2} >1$ in order to apply Lemma~\ref{lemma:erfinequality}.
\end{proof}

\subsection{Combination of Both Approaches}
For a given error threshold $\tilde{e}>0$ and a given $\sigma>0$, we want to obtain the lowest possible $n$ that guarantees that the error threshold is not exceeded. Solving the bound from Proposition~\ref{prop:f} for $n$ and taking the precondition for $n$ into account yields
\begin{align*}
n \geq  1 + \frac{\sigma}{\pi} \sqrt{- \log(4 \pi^3 \tilde{e}^2)} \quad \land \quad n> 1 + \frac{\sigma}{\sqrt{2} \pi}  \ .
\end{align*}
By applying the method to the results of Proposition~\ref{prop:g}, we obtain
\begin{align*}
n \geq \frac{1}{\sigma} \sqrt{- \log (2 \pi^2 \sigma^2 \tilde{e}^2)} \quad \land \quad n > \frac{\sqrt{2}}{\sigma} \ .
\end{align*}
Thus, we define
\begin{align*}
n_f &:= \max \left( 1 + \frac{\sigma}{\pi} \sqrt{- \log(4 \pi^3 \tilde{e}^2)}\ ,\ 1 + \frac{\sigma}{\sqrt{2} \pi} \right) \ , \\
n_g &:= \max \left( \frac{1}{\sigma} \sqrt{- \log (2 \pi^2 \sigma^2 \tilde{e}^2)} \ , \ \frac{\sqrt{2}}{\sigma} \right) \ . 
\end{align*}
Consequently, we set $n := \lceil \min(n_f, n_g) \rceil$ and choose the according method for approximation. Examples with $\tilde{e}=1E-5$ and $\tilde{e}=1E-15$ are given in Fig.~\ref{fig:bounds}. Note that the required $n$ is slightly higher than the empirically obtained values given in Table~\ref{table:combineapprox}, because the theoretical bounds are not tight.

\begin{figure}
	\centering
	\includegraphics[width=8cm]{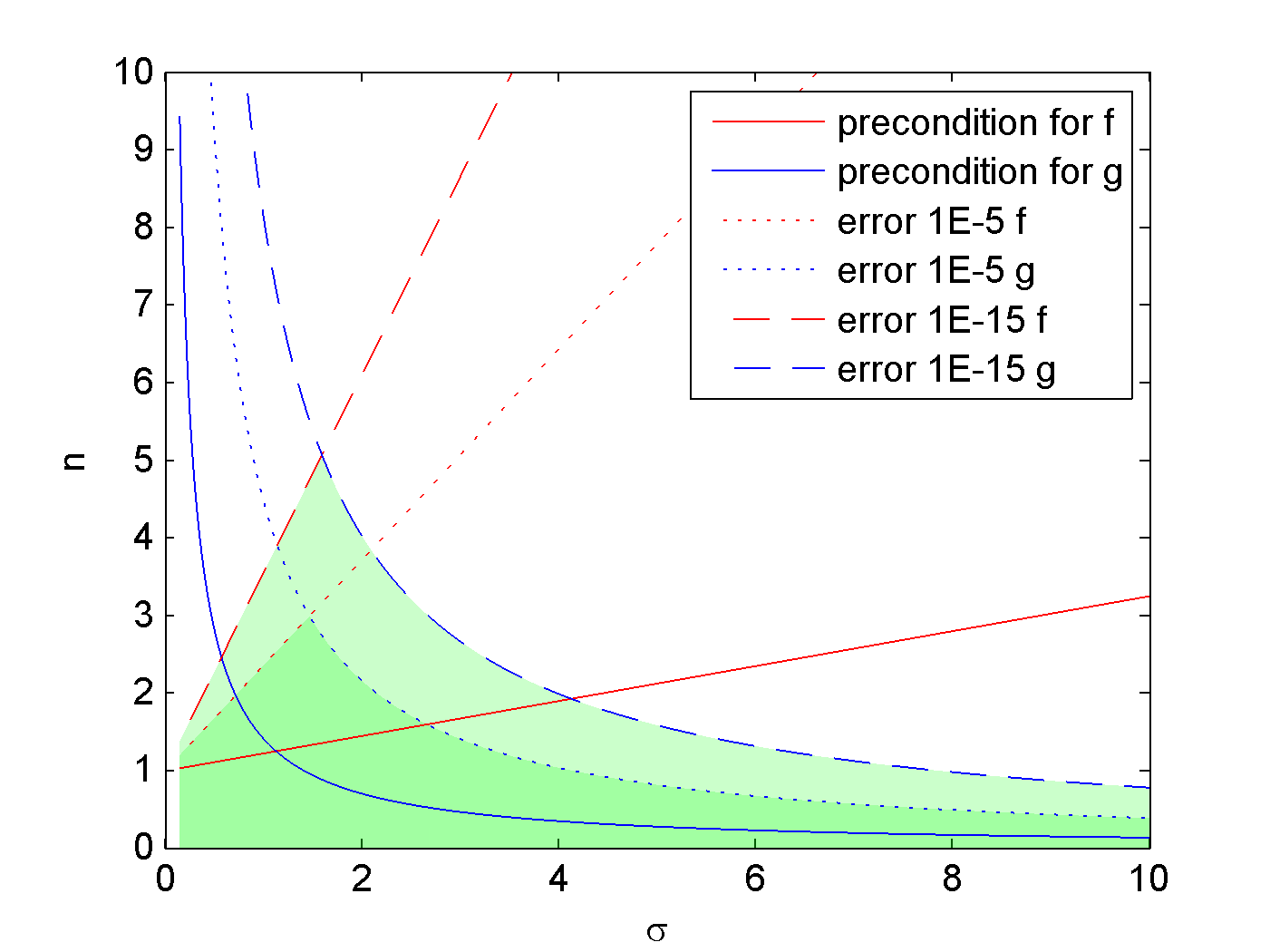}
	\caption{Theoretical results for minimum value of $n$. We consider $\tilde{e}=1E-5$ and $\tilde{e}=1E-15$. The required $n$ by combining both approximations is shaded in dark green and light green respectively.}
	\label{fig:bounds}
\end{figure}

\section{Conclusion} \label{sec:conclusion}
In this paper, we have shown theoretical bounds on two different representations of the wrapped normal probability density function based on truncated infinite series. In both cases, the error decreases exponentially with increasing number of summands $n$. Furthermore, we have shown that one representation performs well for small $\sigma$ whereas the other performs well for large $\sigma$. This motivates their combined use depending on the value of $\sigma$. Our empirical results match well with the theoretical conclusions. We have proposed piecewise approximations based on the two representations with a varying number of summands for several levels of accuracy. 

\section*{Acknowledgment} \noindent
This work was partially supported by grants from the German Research Foundation
(DFG) within the Research Training Groups RTG 1194 ``Self-organizing
Sensor-Actuator-Networks'' and RTG 1126 ``Soft-tissue Surgery: New
Computer-based Methods for the Future Workplace''.



\bibliographystyle{IEEEtran_Capitalize}
\bibliography{../../../Literatur/gk,../../ISASPublikationen/BibTex/ISASPublikationen_laufend}


\end{document}